\documentclass[11pt]{article}
\usepackage{fullpage}

\usepackage{amsmath,amssymb,graphicx,amsthm}

\usepackage{color}

\newcommand{\ignore}[1]{}

\newcommand{\delete}[1]{}

\newtheorem{thm}{Theorem}
\newtheorem{lemma}[thm]{Lemma}

\newtheorem{definition}[thm]{Definition}
\newtheorem{cor}[thm]{Corollary}

\newcommand{\dist}{\ensuremath{{\sf dist}_{p, D}}}
\newcommand{\Astar}{\ensuremath{A^*}}
\newcommand{\G}{\ensuremath{\mathbb{G}}}

\newcommand{\Z}{\ensuremath{\mathbb{Z}}}
\newcommand{\Zm}{\ensuremath{\mathbb{Z}_N}}
\newcommand{\ZN}{\ensuremath{\mathbb{Z}_N}}
\newcommand{\Zp}{\ensuremath{\mathbb{Z}_p}}

\newcommand{\eps}{\ensuremath{\epsilon}}

\newcommand{\bool}{\ensuremath{\{0,1\}}}

\newcommand{\SumDist}{\textsc{Sum-Distinguish}}
\newcommand{\SumEqual}{\textsc{Sum-Equal}}

\newcommand{\deq}{\stackrel{{\rm def}}{=}}

\hyphenation{non-empty}

\begin{document}

\title{\bf One-Round Multi-Party Communication Complexity of Distinguishing Sums}
\date{}
\author{
Daniel Apon\thanks{Dept.\ of Computer Science, University of Maryland. Email:
{\tt \{dapon,jkatz,amaloz\}@cs.umd.edu}.}
\and Jonathan Katz$^*$
\and Alex J.\ Malozemoff$^*$
}

\maketitle

\begin{abstract}
We consider an instance of the following problem:
Parties $P_1, \ldots, P_k$ each receive an input~$x_i$,
and a coordinator (distinct from each of
these parties)
wishes to compute $f(x_1, \ldots, x_k)$ for some predicate~$f$.
We are interested in \emph{one-round} protocols where each party sends a single
message to the coordinator; there is no communication between
the parties themselves.
What is the minimum communication complexity needed to compute~$f$, possibly with bounded error?

We prove tight bounds on the one-round communication complexity when $f$
corresponds to the promise problem of \emph{distinguishing sums} (namely,
determining which of two possible values the $\{x_i\}$ sum to) or the problem
of determining whether the $\{x_i\}$ sum to a particular value.  Similar
problems were studied previously by Nisan and in concurrent work by Viola.  Our
proofs rely on basic theorems from additive combinatorics, but are otherwise
elementary.
\end{abstract}

\section{Introduction}\label{sec:outline}

Consider the following general problem:
There are $k$ parties $P_1, \ldots, P_k$, with each party $P_i$ holding input~$x_i$.
A central coordinator (distinct from each of the parties)
wants to learn~$f(x_1, \ldots, x_k)$ for some fixed boolean function (or partial function)~$f$.
We are interested in \emph{one-round} protocols where each party sends a single
message to the coordinator and the coordinator then computes the result;
there is no communication between
the parties, nor does the coordinator send anything to the parties.
A trivial solution, of course, is for each party $P_i$ to send $x_i$ to the coordinator,
who then applies $f$ to the complete set of inputs and thus obtains the correct result.
For which functions $f$ can the total communication complexity be reduced,
possibly with bounded error?

Let $\G$ denote an abelian group and assume each party's input lies
in~$\G$. We study the communication complexity of two (related)
functions in the model described above.

\begin{definition}\label{def:distinguish}
Fix distinct $g_0, g_1 \in \G$.
The $k$-party \SumDist\ problem (relative to $g_0, g_1$) is defined by letting $f$ be the partial
function given by
\[f(x_1, \ldots, x_k) = \left\{ \begin{array}{ccl}
1 & & \mbox{if \; $\sum_i x_i = g_1$} \\
0 & & \mbox{if \; $\sum_i x_i = g_0$}
\end{array}\right..\]
\end{definition}

\begin{definition}\label{defn:sum-equal}
Fix $g \in \G$. The $k$-party
\SumEqual\ problem (relative to~$g$) is defined by letting $f$ be the function given by
\[f(x_1, \ldots, x_k) = \left\{ \begin{array}{ccl}
1 & & \mbox{if \; $\sum_i x_i = g$} \\
0 & & \mbox{otherwise}
\end{array}\right..\]
\end{definition}

We explore the communication complexity of solving the above
for $\G=\Z_p$ ($p$ prime) and
$\G=\Z$.\footnote{In the first case, each party's input is an
arbitrary element of~$\Z_p$; in the second case, each party's input
is an $n$-bit integer, with $n$ being an additional parameter of the
problem.}
Our proofs rely on the generalized
Cauchy-Davenport Theorem, but otherwise use only
elementary arguments.
Our results can be summarized as follows:
\begin{itemize}
\item For \SumDist\ with $\G=\Z$ or $\G=\Z_p$, $p$ prime, we show a deterministic protocol
with total communication complexity $k \log k + O(k)$; note that the communication
in the latter case is independent of~$p$.
For $\G=\Z_p$, we prove a lower bound of
$k \cdot \min\{\log k, \log p\} - k$ on the communication complexity of any deterministic protocol.

\item For \SumEqual\ with $\G=\Z$ or $\G=\Z_p$, $p$ prime,
we show a protocol using public randomness
with error~$\epsilon$ and total communication complexity $k \log k/\eps + O(k)$.
A lower bound (for deterministic protocols and $\G=\Zp$) is implied by
our lower bound for \SumDist.
\end{itemize}
We also briefly consider the case $\G=\Z_N$ for square-free $N$.

\subsection{Motivation}
The problems above are natural in the \emph{number-in-hand} model of
multi-party communication complexity, and variants of the \SumDist\
and \SumEqual\ problems have been considered in prior
work~\cite{Smi88, Nisan93, B:KusNis97, MNSW98, ECCC:Viola11}, sometimes
for $k=2$ only. (We survey prior results in the next section.)

Our motivation, though, comes from the domain of \emph{distributed
intrusion detection}. The goal of distributed intrusion-detection
systems (DIDS) is to monitor a network across a number of
hosts in order to detect aberrant
behavior (indicating a potential intrusion) and, if detected, raise an alarm. In typical
operation of DIDS, each host
records some observations over a specified time period; at the end of this period,
each of those hosts sends all the data it
has recorded to a central coordinator, which then determines
--- based on the aggregate data from all the hosts --- whether or not to issue an
alarm. In some systems (e.g., when the hosts are geographically distributed, when
communication is over a low-bandwidth channel, and/or when the volume of data recorded
at each host is huge), reducing the communication becomes critical.
While there has been some work aimed at reducing the communication complexity of
DIDS~\cite{SRImalware,SRIalert,abstraction}, we are not aware of any prior theoretical study
of the problem.

If we model the decision of the coordinator by some predicate~$f$ computed over the data $x_1, \ldots, x_k$
recorded by each host, we recover exactly the general problem being considered here.
(For the application to distributed intrusion detection,
direct communication between the hosts would typically
be impossible, and it would be undesirable for the coordinator to have to send data to the hosts.)
While \SumDist\ and \SumEqual\ are too simplistic to capture real-world decision procedures,
they were chosen to
correspond to the ``DIDS-like'' problems of distinguishing between a ``good'' system state
$g_0$ and a ``bad'' system state~$g_1$ (in the
case of \SumDist),
or identifying when the system is in one particular ``bad'' state~$g$
(in the case of \SumEqual).

\subsection{Prior Work}

For the case of \SumEqual\ with $\G=\Z$ and where each party's
input is an $n$-bit integer, Nisan~\cite{Nisan93} shows a randomized protocol with
total communication complexity $O(k \log n)$.
Our deterministic protocol achieves better communication complexity $k\log k + O(k)$
when $k < n$.
%
%
In concurrent and independent work, Viola~\cite{ECCC:Viola11}
studies $\SumEqual$ with $\G=\Z_p$,
and shows $\Theta(k \log k)$ upper and lower bounds
on the communication complexity for certain ranges of~$k$ and~$p$.
Our protocols for \SumEqual\ achieve similar bounds
for more general $k, p$, and using different tools.

\ignore{From Daniel: Belated comparison to Viola's paper:
\begin{quote}\begin{em}
Viola's SUM-EQUAL in $\Zp$, $p$ prime, lower bound is:

``The CC of $k$-player SUM-EQUAL modulo $p$ a prime between $k^{(1/4)}$ and
$2k^{(1/4)}$ is $\Omega(k \log k)$.''

So, he's using $p < 2k^{(1/4)} < k$. We're doing $k < p$, a different case.

His upper bound is for the case $k < p$.\end{em}
\end{quote}}

Our protocols use a direct, combinatorial perspective that (along the way)
explores a new connection between communication complexity and additive combinatorics
that may be appealing in its own right. It will be interesting
to explore other connections between these fields.



\subsection{Organization}

In Section~\ref{sec:prelim}, we recall the necessary preliminaries from
additive combinatorics. In Section~\ref{sec:sumdist}, we prove upper and lower
bounds for the \SumDist\ problem over~\Zp. In
Section~\ref{sec:sumequal}, we give a randomized protocol for
\SumEqual\ over \Zp. In Section~\ref{sec:extensions}, we show how our protocols can
be extended
to work over \Z\ or \Zm\ for $m$ the product of few primes.

\section{Preliminaries}\label{sec:prelim}

We let \G\ denote an abelian group, written additively.
$\Z$ denotes the integers, and $\Z_p$ is the group $\{0, \ldots, p-1\}$
under addition modulo~$p$.
We use ``$\log$'' to refer to logarithms base~2.

\subsection{Tools from Additive Combinatorics}

We utilize two well-studied, fundamental objects
from additive combinatorics:
{\em sumsets\/} and {\em arithmetic progressions}.

\begin{definition}\label{defn:sumset}
For (not necessarily distinct) sets $A_1,\dots,A_k \subseteq \G$, define their {\sf sumset} as
$\sum_{i=1}^k A_i=A_1+\dots+A_k\deq\left\{\sum_{i=1}^k a_i\;|\;a_1 \in A_1,\dots, a_k \in A_k\right\}$. That
is, $\sum_i A_i$ is the set of all possible sums obtainable
by choosing one element from each set~$A_i$.
\end{definition}

In our constructions we use sumsets of arithmetic progressions, i.e.
sequences of integers with common difference~$D$.
We refer to these as~\emph{$D$-APs}.

\begin{definition}\label{defn:ap-set}
Fix a prime $p$ and a difference $D\ne 0\bmod p$. For any $b\in \{0, \ldots, D-1\}$,
let
\[
A_{(b)}\deq\left\{b,\; b+D,\; b+2D,\; \dots,\; b+\left(\left\lfloor\frac{p-1-b}{D}\right\rfloor\right)D\right\}\subseteq\Zp
\]
denote the {\sf $D$-AP in $\mathbb{Z}_p$ with base~$b$}.
\end{definition}

\noindent
Note that we only
consider $D$-APs of
maximal size with no ``wrap-around'';
i.e., the base $b$ is less than $D$,
and the progression contains $b+iD$ for all $i \geq 0$ with
$b+iD<p$.
As an example, the maximal 7-APs in $\Z_{19}$ are
\begin{eqnarray*}
& A_{(0)} = \{0, 7, 14\}; \;\;\; A_{(1)} = \{1,8,15\}\; \;\;\; A_{(2)} = \{2,9,16\} & \\
& A_{(3)} = \{3,10,17\};\;\;\; A_{(4)} = \{4,11,18\};\;\;\;A_{(5)} = \{5,12\};\;\;\; A_{(6)} = \{6,13\}.&
\end{eqnarray*}



For our lower bounds, we use the generalized Cauchy-Davenport Theorem~\cite{JEP:Cauchy1813,JLM:Davenport1935}.

\begin{thm}[Generalized Cauchy-Davenport Theorem]\label{thm:c-d-general}
For a prime $p$, and $k$ (not necessarily distinct) nonempty
sets $A_1, \dots, A_k\subseteq \Zp$,
\[
\left|{\textstyle \sum_{i=1}^k A_i} \right|\geq\min\left\{p, \;
{\textstyle \sum_{i=1}^k\left|A_i\right|}-k+1\right\}.
\]
\end{thm}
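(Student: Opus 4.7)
The plan is to argue by induction on the number of summands $k$. The base case $k = 1$ is trivial, and $k = 2$ is the classical Cauchy--Davenport theorem, which does the real work; once it is in hand, the extension to arbitrary $k$ is routine.

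For the induction step, assume the bound for $k-1$ summands and set $S := A_1 + \cdots + A_{k-1}$. By the inductive hypothesis, $|S| \geq \min\{p,\, \sum_{i=1}^{k-1}|A_i| - k + 2\}$. If this minimum equals $p$, then $S = \Zp$, so $S + A_k = \Zp$ and the claimed bound holds with room to spare. Otherwise $|S| \geq \sum_{i=1}^{k-1}|A_i| - k + 2$, and applying the two-set Cauchy--Davenport theorem to $S$ and $A_k$ yields
\[
|S + A_k| \;\geq\; \min\{p,\, |S| + |A_k| - 1\} \;\geq\; \min\bigl\{p,\, {\textstyle\sum_{i=1}^k |A_i|} - k + 1\bigr\},
\]
completing the induction.

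The real obstacle is therefore the $k=2$ base case. The standard approach is Dyson's $e$-transform: for nonempty $A, B \subseteq \Zp$ and any $e \in \Zp$, define $A' := A \cup (B+e)$ and $B' := B \cap (A-e)$. A short inclusion--exclusion check gives $|A'| + |B'| = |A| + |B|$, and a case analysis on whether an element of $A'$ came from $A$ or from $B+e$ gives $A' + B' \subseteq A + B$, so the transform preserves exactly the quantities that appear in the inequality. Translating so that $0 \in A \cap B$ and iterating the transform with carefully chosen $e$'s, one reduces to the degenerate case $|B| = 1$ while maintaining the invariant $|A+B| \geq |A| + |B| - 1$. Primality of $p$ is essential at the very end: the argument ultimately relies on the fact that $\Zp$ has no nontrivial proper subgroups, which is what prevents the iteration from stalling at an intermediate configuration whose sumset lies strictly between the desired bound and all of $\Zp$.
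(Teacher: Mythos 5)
The paper does not prove this theorem; it cites it as an external result (Cauchy 1813, Davenport 1935) and uses it as a black box. So there is no in-paper proof to compare against. Your argument is, in outline, a correct and standard way to establish the result. The reduction from $k$ summands to the two-set case by induction is complete and exactly right: the only subtlety is splitting on whether the inductive minimum is attained at $p$ or at the linear term, and you handle both cases. For the $k=2$ base case you correctly describe Dyson's $e$-transform and its two key invariants ($|A'|+|B'|=|A|+|B|$ and $A'+B'\subseteq A+B$), both of which check out: the first follows from $|B\cap(A-e)| = |A\cap(B+e)|$, and the second from a two-case check on whether $a'\in A$ or $a'\in B+e$. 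You also correctly identify where primality enters — if the transform stalls for every admissible $e$, one deduces that $A$ is a union of cosets of the cyclic subgroup generated by some nonzero element of $B$, and since $\Zp$ has no proper nontrivial subgroups this forces $A=\Zp$, which trivially satisfies the bound.

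The one place your write-up is genuinely sketchy is exactly that termination step. Saying one reduces to $|B|=1$ ``with carefully chosen $e$'s'' glosses over the crux: one must argue that whenever $|B|\geq 2$ and $A\ne\Zp$, there exists an $e$ (typically $e=a-b$ for suitable $a\in A$, $b\in B\setminus\{0\}$ after translating so $0\in A\cap B$) for which $B'\subsetneq B$, and that the failure of this for all such $e$ forces the subgroup structure just mentioned. That is the entire content of the two-set Cauchy--Davenport theorem; everything else is bookkeeping. Since the paper treats the theorem as known and you have the right skeleton, this is a presentation gap rather than a conceptual one, but if you intend the proof to be self-contained you should spell out the stalling argument explicitly.
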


\noindent
For our constructions, we rely on the fact that sumsets of $D$-APs achieve the above minimum.

\begin{lemma}\label{lemma:c-d-gen-min}
For a prime $p$,
and $k$ (not necessarily distinct) $D$-APs $A_1,\ldots,A_k \subseteq \Zp$,
\[
\left|{\textstyle \sum_{i=1}^k A_i}\right|=\min\left\{p, \; {\textstyle \sum_{i=1}^k |A_i|}-k+1\right\}.
\]
\end{lemma}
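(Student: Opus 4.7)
The plan is to show both the upper and lower bounds on $|\sum_i A_i|$. The lower bound is immediate from Theorem~\ref{thm:c-d-general} (the generalized Cauchy--Davenport Theorem), so the entire content is to exhibit the matching upper bound by computing the sumset of $D$-APs explicitly.

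First I would write each $A_i = \{b_i + jD : 0 \le j \le \ell_i - 1\}$ where $\ell_i = |A_i|$, and observe that any element of the sumset, computed in $\Z$ before reducing mod $p$, has the form $\bigl(\sum_i b_i\bigr) + jD$ for some integer $j$ with $0 \le j \le \sum_i (\ell_i - 1) = \sum_i |A_i| - k$. Conversely, every such $j$ is realizable, since we may write $j = j_1 + \cdots + j_k$ with $0 \le j_i \le \ell_i - 1$ by a greedy decomposition. Thus, setting $M \deq \sum_i|A_i| - k$ and $s \deq \sum_i b_i$, we have
\[
{\textstyle \sum_{i=1}^k} A_i \;=\; \bigl\{\,(s + jD) \bmod p \;:\; 0 \le j \le M\,\bigr\}.
\]

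Next I would use the fact that $D \not\equiv 0 \pmod p$ and $p$ is prime, so $D$ is invertible in $\Zp$ and the map $j \mapsto (s + jD) \bmod p$ is a bijection on $\{0, 1, \ldots, p-1\}$. This gives two cases. If $M + 1 \le p$, i.e., $\sum_i |A_i| - k + 1 \le p$, then the indices $0, 1, \ldots, M$ are pairwise distinct modulo $p$, so the $M+1$ values $(s + jD) \bmod p$ are all distinct, and the sumset has size exactly $\sum_i |A_i| - k + 1$. If instead $M + 1 > p$, then already the subfamily $j = 0, 1, \ldots, p-1$ yields $p$ distinct elements, which exhaust $\Zp$, so the sumset has size $p$.

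Combining the two cases gives $|\sum_i A_i| = \min\{p,\, \sum_i|A_i| - k + 1\}$, matching the lower bound of Theorem~\ref{thm:c-d-general}. There is no real obstacle here; the only step requiring care is verifying that $M \le p-1$ forces distinctness modulo $p$, which follows because $j_1 D \equiv j_2 D \pmod p$ with $D$ invertible implies $j_1 \equiv j_2 \pmod p$, and $|j_1 - j_2| \le M < p$ then forces $j_1 = j_2$.
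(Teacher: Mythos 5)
Your proof is correct and follows essentially the same strategy as the paper: the lower bound comes from Cauchy--Davenport (Theorem~\ref{thm:c-d-general}), and the upper bound comes from computing the sumset of $D$-APs explicitly as an arithmetic progression with at most $\sum_i|A_i|-k+1$ terms. The only difference is presentational --- you treat all $k$ sets at once and make the case split on $M+1$ versus $p$ explicit, whereas the paper proves the $k=2$ case and invokes induction, relying on $|A+B|\le p$ implicitly.
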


\begin{proof} We prove the lemma for $k=2$; the general case
follows by induction. Let $A, B$ denote the two sets in question. By
Theorem~\ref{thm:c-d-general}, we have
$|A+B| \ge \min\{p, |A|+|B|-1\}$. It remains to upper bound $|A+B|$.

Write $A = \{b_A + i  D \mid 0 \leq i < |A|\}$,
$B = \{b_B + i' D \mid 0 \leq i' < |B|\}$
with $0\le b_A, b_B < D$. Then
\begin{eqnarray*}
A+B &=& \left\{\rule{0pt}{9pt} b_A + b_B + Di + Di'\bmod{p} \mid 0 \leq i < |A|, \, 0 \leq i' < |B| \right\} \\
    &=& \left\{\rule{0pt}{9pt} b_A + b_B + Di''\bmod{p} \mid 0 \leq i'' \leq |A|+|B|-2\right\}.
\end{eqnarray*}
So $|A+B|$ contains at most $|A|+|B|-1$ elements, giving the desired bound.
\end{proof}

\subsection{Notions of Distance and Contiguity}\label{sec:distance}

In  Section~\ref{sec:ub-dist}, we use a notion of distance between two elements $g_0,g_1\in\Zp$.
Specifically,
we define their distance relative to some difference $D$
to be the minimum number of additions or subtractions by~$D$
(modulo~$p$) needed to map $g_0$ to~$g_1$.
We define this formally next.

\begin{definition}\label{defn:distance}
Fix a prime $p$ and a difference $D\ne 0\bmod p$. For any $g_0, g_1\in\Zp$,
define the {\sf distance from $g_0$
to~$g_1$} (relative to~$p, D$) as
\[
\dist(g_0, g_1) \deq \min\left\{(g_1 - g_0)D^{-1}\bmod{p},\; (g_0-g_1)D^{-1}\bmod{p}\right\}.
\]
(The minimum is taken by viewing each term as an integer in $\{0, \ldots, p-1\}$.)
\end{definition}


We say $g_0, g_1\in\Zp$ are \emph{adjacent} (with respect to $\dist$) if $\dist(g_0, g_1)=1$.
We say a set $A\subseteq\Zp$
is \emph{contiguous} (relative to~$D$) if it can be ordered so that all adjacent elements
in the ordering are adjacent with respect to~$\dist$. When $|A|=1$,
$A$ is vacuously contiguous. Clearly $D$-APs are contiguous; we observe that sumsets of $D$-APs are also contiguous.

\begin{lemma}\label{lemma:contiguous}
Fix a prime $p$, difference $D\ne 0\bmod p$, and any $D$-APs $A_1, \ldots, A_k \subseteq\Zp$.
Then $\sum_{i=1}^k A_i$ is contiguous relative to~$D$.
\end{lemma}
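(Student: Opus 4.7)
The plan is to give a clean structural description of $\sum_{i=1}^k A_i$ and then read off contiguity from it. Specifically, I would prove by induction on $k$, repeating the arithmetic already done in the proof of Lemma \ref{lemma:c-d-gen-min}, that
\[
\textstyle \sum_{i=1}^k A_i \;=\; \{\, B + jD \bmod p \;:\; 0 \le j \le L\,\},
\]
where $B = \sum_i b_{A_i}$ is the sum of the bases and $L = \sum_i |A_i| - k$. The base case $k=1$ is Definition \ref{defn:ap-set} itself, and the inductive step is a line-for-line repetition of the two-set computation from the proof of Lemma \ref{lemma:c-d-gen-min}, applied to the current partial sumset (which, by the inductive hypothesis, has exactly the same ``arithmetic progression modulo $p$'' shape as a single $D$-AP) and the next $D$-AP $A_{k+1}$.

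With this description in hand, the contiguity claim splits into the two cases that parallel the bound in Lemma \ref{lemma:c-d-gen-min}. If $L+1 = \sum_i|A_i|-k+1 \le p$, then the residues $B, B+D, B+2D, \ldots, B+LD$ modulo $p$ are pairwise distinct, since $D$ is invertible modulo $p$ and $L < p$; enumerating them in this order lists each element of the sumset exactly once with consecutive differences equal to $D \bmod p$, hence at $\dist$-distance $1$. Otherwise $\sum_i|A_i|-k+1 > p$, so Lemma \ref{lemma:c-d-gen-min} forces $\sum_{i=1}^k A_i = \Zp$; since $D \ne 0 \bmod p$ generates the cyclic group $\Zp$ of prime order, the ordering $0, D, 2D, \ldots, (p-1)D$ (all mod $p$) enumerates $\Zp$ with consecutive differences of $D$, again giving contiguity.

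The only nontrivial step is the structural claim, and even that is essentially a restatement of the calculation already in Lemma \ref{lemma:c-d-gen-min}. I do not expect a real obstacle: one just needs to observe that the identity $\{x + Dj\} + \{y + Di\} = \{x + y + D(j+i)\}$ holds in $\Zp$ regardless of whether the range of $j$ (or $i$) exceeds $p$, so the possible ``wrap-around'' modulo $p$ in the intermediate partial sumsets does not interfere with the induction.
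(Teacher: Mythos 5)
Your proposal is correct and follows essentially the same route as the paper: read the arithmetic-progression-modulo-$p$ structure of the sumset off the computation in the proof of Lemma~\ref{lemma:c-d-gen-min} and then induct on~$k$. You are in fact a bit more explicit than the paper's terse argument — you state the inductive hypothesis precisely and handle the wrap-around case $\sum_i|A_i|-k+1 > p$ (where the sumset becomes all of $\Zp$) separately — but these are refinements of the same idea, not a different approach.
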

\begin{proof}
From the proof of Lemma~\ref{lemma:c-d-gen-min}, for any
$D$-APs $A$ and $B$ we have \[A+B = \{b_A + b_B + iD\bmod{p} \mid 0 \leq i \leq |A|+|B|-2\},\]
which is contiguous by definition.
Induction on $k$ completes the proof.
\end{proof}

\begin{cor}\label{cor:separate}
Fix a prime $p$, difference $D\ne 0\bmod p$, and $D$-APs $A_1, \ldots, A_k\subseteq\Zp$.
For any $g_0, g_1\in\Zp$,
if $\dist(g_0, g_1)\ge |\sum_{i=1}^k A_i|$ then $g_0$ and $g_1$ cannot both be in $\sum_{i=1}^k A_i$.
\end{cor}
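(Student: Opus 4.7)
The plan is to combine the structural description of sumsets of $D$-APs emerging from the proofs of Lemmas \ref{lemma:c-d-gen-min} and \ref{lemma:contiguous} with a direct computation of $\dist$. First I would strengthen the conclusion of Lemma \ref{lemma:contiguous} slightly: by induction on $k$, the sumset $S \deq \sum_{i=1}^k A_i$ can be written explicitly as
\[
S = \{b + mD \bmod p \mid 0 \le m \le |S|-1\}
\]
for some base $b \in \{0,\ldots,p-1\}$. In other words, $S$ is itself an arithmetic progression of common difference $D$ in $\Zp$ (possibly with wrap-around) whose length equals $|S|$.

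With this in hand, I argue by contradiction. Suppose both $g_0, g_1 \in S$, so $g_j = b + i_j D \bmod p$ for some $i_j \in \{0, \ldots, |S|-1\}$. If $|S| = p$, then the two terms $(g_1-g_0)D^{-1} \bmod p$ and $(g_0-g_1)D^{-1} \bmod p$ appearing in Definition \ref{defn:distance} sum to $p$ whenever $g_0 \ne g_1$, so $\dist(g_0,g_1) \le (p-1)/2 < |S|$, and the hypothesis $\dist(g_0,g_1) \ge |S|$ cannot hold. Otherwise $|S| < p$; taking $i_0 \le i_1$ without loss of generality gives $0 \le i_1 - i_0 \le |S|-1 < p$, and hence $(g_1 - g_0)D^{-1} \bmod p$ equals the integer $i_1 - i_0$ exactly. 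Therefore
\[
\dist(g_0,g_1) \;\le\; i_1 - i_0 \;\le\; |S|-1 \;<\; |S|,
\]
contradicting the hypothesis.

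The only nontrivial step is the inductive calculation giving the AP-form of $S$: if $S_{k-1}$ already has the stated form with length $L$ and base $b'$, and $A_k$ is a $D$-AP of length $|A_k|$ with base $b_k$, then $S_{k-1} + A_k = \{b' + b_k + m'D \bmod p \mid 0 \le m' \le L + |A_k| - 2\}$, whose cardinality agrees with the exact value $\min\{p,\; \sum_i|A_i|-k+1\}$ furnished by Lemma \ref{lemma:c-d-gen-min} (so no collisions occur until the set saturates $\Zp$). Once this structural fact is in place, the distance bound is essentially a one-line consequence, and the only place care is required is in ruling out modular wrap-around when translating the index difference $i_1 - i_0$ into $(g_1-g_0)D^{-1} \bmod p$, which is immediate from $i_1 - i_0 < |S| \le p$.
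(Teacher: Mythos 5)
Your proof is correct and follows essentially the approach the paper intends: the corollary is stated as an immediate consequence of Lemma~\ref{lemma:contiguous} (contiguity of $\sum_i A_i$ relative to $D$), and your explicit AP-form $S=\{b+mD \bmod p : 0\le m\le |S|-1\}$ is just a sharpened statement of that contiguity from which the distance bound $\dist(g_0,g_1)\le |S|-1$ follows directly. The wrap-around case $|S|=p$ and the index-difference computation you carry out are exactly the details one would need to fill in, so this matches the intended argument.
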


Consider again the example of $\Z_{19}$ with $D=7$.
Then $A_{(2)}+A_{(3)}=\{2, 9, 16\}+\{3, 10, 17\}=\{5,12,0,7,14\}$ is contiguous, and
$|A_{(2)}+A_{(3)}|=5$.
Taking $2, 5 \in \Z_{19}$, we have
${\sf dist}_{19,7}(5,2)=5 \geq |A_{(2)}+A_{(3)}|$ and, indeed,
$5\in A_{(2)}+A_{(3)}$ but $2\not\in A_{(2)}+A_{(3)}$.

\section{\SumDist\ over \Zp}\label{sec:sumdist}
\subsection{A Deterministic Protocol}\label{sec:ub-dist}

Corollary~\ref{cor:separate} suggests a technique for
efficiently distinguishing two sums. Say $k$ parties wish to
determine whether their inputs $x_1, \ldots, x_k$ sum to $g_0$
or~$g_1$ (modulo~$p$). For some fixed, agreed-upon difference~$D$
(we discuss how to set~$D$ below), each party $P_i$ sends to the
coordinator the \emph{index} of the $D$-AP $A_i$ in which its input~$x_i$
lies. The coordinator thus learns that the sum $\sum_i x_i$ lies in
the sumset $A \deq \sum_{i=1}^k A_i$. As long $\dist(g_0, g_1) \geq
|A|$, it cannot be the case that both $g_0$ and $g_1$ are in~$A$; in
that case, the coordinator learns the sum by checking which of $g_0, g_1$ lies in~$A$.

The main difficulty in implementing the above is that $g_0, g_1$ may
be very ``close.'' In that case, in order to ensure that the above
succeeds we need to ensure that $|A|$ is small. This, in turn,
requires the $D$-APs to be small, which means that there are more of
them. Since the communication from each party is the logarithm of
the number of $D$-APs, this makes the communication complexity
worse. Ideally, we would like to set $D$ independently of the
relative distance between $g_0$ and~$g_1$.

A solution is to have the parties ``shift'' their inputs by each
locally multiplying them (modulo~$p$) by an agreed-upon
constant~$c$. The problem then reduces to distinguishing whether the
shifted inputs sum to $g'_0 \deq c\cdot g_0 \bmod p$ or $g'_1 \deq c
\cdot g_1 \bmod p$. The insight is that regardless of $g_0, g_1$, we
can set $c$ appropriately to ensure that $g'_0$ and $g'_1$ are ``far
apart.''

\delete{We suggest that the reader begin by examining the statements
of Lemmas~\ref{lemma:compute-c} and~\ref{lemma:compute-d} and
Protocol~\ref{prot:full} in order to understand the \emph{geometric
consequences} of each. We recommend referring to
Figure~\ref{fig:Z19-cycle} for a visual interpretation and the
accompanying discussion in Section~\ref{sec:distance}. A more
careful analysis of the detailed arithmetic in each proof is found
in the appendix and is better tackled after some initial intuition
for its purpose is gained.}

We proceed with the details, beginning with some preliminary lemmas.

\begin{lemma}\label{lemma:compute-c}
Fix a prime $p>2$ and a difference $D\ne 0\bmod p$. Then for any
distinct $g_0, g_1\in\Zp$ there exists a value $c \neq 0 \bmod p$
such that $\dist(c\cdot g_0\bmod p,\; c\cdot g_1\bmod
p)=\frac{(p-1)}{2}$.
\end{lemma}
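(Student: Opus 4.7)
The plan is to reduce the problem to finding a scalar multiple of a fixed nonzero element of $\Zp$ that equals $(p-1)/2$. Set $\delta \deq (g_1 - g_0) D^{-1} \bmod p$; since $g_0 \ne g_1$ and $D$ is invertible modulo $p$, we have $\delta \ne 0 \bmod p$. The key observation is that multiplication by $c$ interacts linearly with the definition of \dist: namely,
\[
\dist(c \cdot g_0 \bmod p, \; c \cdot g_1 \bmod p) \;=\; \min\bigl\{\, c\delta \bmod p,\; -c\delta \bmod p \,\bigr\},
\]
viewing each residue as an integer in $\{0, \ldots, p-1\}$. This follows directly by substituting into Definition~\ref{defn:distance} and pulling the $c$ through.

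Next I would simply solve $c \delta \equiv (p-1)/2 \pmod{p}$ for $c$. Note $(p-1)/2$ is a well-defined integer in $\{1, \ldots, p-1\}$ because $p > 2$ is an odd prime. Since $\delta$ is a unit in $\Zp$, we can take $c \deq \bigl((p-1)/2\bigr)\cdot \delta^{-1} \bmod p$. This $c$ is nonzero modulo $p$ (as both factors are), and it gives $c\delta \bmod p = (p-1)/2$ and $-c\delta \bmod p = (p+1)/2$. The minimum of these two quantities is $(p-1)/2$, as required.

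There is no real obstacle here; the statement is essentially a linear-algebra fact about $\Zp$ as a one-dimensional vector space over itself, dressed up in the language of the metric \dist. The only mild subtlety is checking that $(p-1)/2$ makes sense as an element of $\{0, \ldots, p-1\}$, which requires $p$ odd and is exactly why the hypothesis $p > 2$ appears in the statement.
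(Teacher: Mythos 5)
Your proof is correct and takes essentially the same route as the paper's: your $c = \frac{p-1}{2}\delta^{-1} \bmod p$ unwinds to exactly the paper's choice $c = \frac{p-1}{2}D(g_1-g_0)^{-1}\bmod p$, and both arguments verify $c(g_1-g_0)D^{-1}\equiv (p-1)/2 \pmod p$ and then note that $(p-1)/2 < (p+1)/2$ to conclude the minimum in the definition of $\dist$ is $(p-1)/2$. The only difference is cosmetic — you name $\delta$ and compute both residues explicitly, while the paper inlines the substitution.
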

\begin{proof}
Set $c=\frac{(p-1)}{2}D(g_1-g_0)^{-1}\bmod p$. Then
\begin{eqnarray*}
c(g_1-g_0)D^{-1} & = &
    \frac{(p-1)}{2}D(g_1-g_0)^{-1}(g_1-g_0)D^{-1}\bmod p \\
    &=&\frac{(p-1)}{2}\bmod p.
\end{eqnarray*}
Since $(p-1)/2 < -(p-1)/2 \bmod p$ (viewing the right-hand term as
an integer in $\{1, \ldots, p-1\}$), this completes the proof.
\end{proof}


\begin{lemma}\label{lemma:compute-d}
Fix a prime $p>5$, and integer $k < p/4$. Set
$D=\left\lceil\frac{2kp}{(p-3)}\right\rceil<p$. Then for any $D$-APs
$A_1, \dots, A_k\subseteq\Zp$, we have $\left|\sum_{i=1}^k
A_i\right|\le\frac{(p-1)}{2}$.
\end{lemma}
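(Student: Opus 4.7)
The plan is to reduce the claim, via the sumset bound in Lemma~\ref{lemma:c-d-gen-min}, to an arithmetic inequality about the maximum possible size of a $D$-AP in $\Z_p$, and then to verify that inequality with the particular value of $D$ stipulated in the statement.

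First, I would invoke Lemma~\ref{lemma:c-d-gen-min} to write
\[
\left|{\textstyle \sum_{i=1}^k A_i}\right| \;=\; \min\!\left\{p,\; {\textstyle \sum_{i=1}^k |A_i|} - k + 1\right\},
\]
so it suffices to show $\sum_{i=1}^k |A_i| - k + 1 \le (p-1)/2$ (since this is already smaller than $p$ by a wide margin).

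Next I would control each $|A_i|$. By Definition~\ref{defn:ap-set}, any $D$-AP $A_{(b)}$ satisfies $|A_{(b)}| = \lfloor (p-1-b)/D \rfloor + 1$, so the crude bound $|A_i| \le (p-1)/D + 1$ holds. Summing over $i$ gives $\sum_i |A_i| \le k(p-1)/D + k$, hence
\[
{\textstyle \sum_{i=1}^k |A_i|} - k + 1 \;\le\; \frac{k(p-1)}{D} + 1.
\]
Then I would plug in the hypothesis $D \ge 2kp/(p-3)$ to get $k(p-1)/D \le (p-1)(p-3)/(2p)$, and verify the resulting inequality $(p-1)(p-3)/(2p) + 1 \le (p-1)/2$, which simplifies to $p \ge 3$ (clearing the common factor $2p$ yields $p^2 - 2p + 3 \le p^2 - p$). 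Since $p > 5$, this holds, completing the argument.

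There is no real obstacle here: once one recognizes that $|A_i|$ is bounded by $(p-1)/D + 1$, the choice $D = \lceil 2kp/(p-3)\rceil$ is precisely tuned to make the target inequality hold with a small slack. The only mildly subtle point is checking that we may discard the $\min$ with $p$; this is immediate since the upper bound we derive is at most $(p-1)/2 < p$. The hypotheses $p > 5$ and $k < p/4$ are used only to ensure that the chosen $D$ is a legitimate nonzero difference modulo $p$ (so that $D$-APs are defined as in Definition~\ref{defn:ap-set}) and that the final algebraic inequality goes through.
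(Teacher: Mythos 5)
Your proof is correct and follows essentially the same route as the paper's: invoke Lemma~\ref{lemma:c-d-gen-min} to reduce to bounding $\sum_i |A_i| - k + 1$, bound each $|A_i|$ by roughly $p/D + 1$, and substitute the prescribed value of $D$. The only cosmetic difference is that you use the marginally tighter bound $|A_i| \le (p-1)/D + 1$ where the paper uses $|A_i| \le \lceil p/D \rceil \le p/D + 1$, which changes the final arithmetic slightly but not the substance.
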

\begin{proof}
By Lemma~\ref{lemma:c-d-gen-min},
\[
\left|\sum_{i=1}^k A_i\right|=\min\left\{p,\; \sum_{i=1}^k\left|A_i\right|-k+1\right\}.
\]
Observe that $|A_i|\le\lceil\frac{p}{D}\rceil\le\frac{p}{D}+1$. Therefore,
\begin{eqnarray*}
\sum_{i=1}^k\left|A_i\right|-k+1 &\le& \sum_{i=1}^k\left(\frac{p}{D}+1\right)-k+1 \\
    &\le& \frac{kp(p-3)}{2kp}+k-k+1 = \frac{(p-1)}{2}, 
\end{eqnarray*}
completing the proof.
\end{proof}

\begin{thm}\label{theorem:ub-dist}
There is a universal constant $C$ such that for any prime $p$ and
positive integer~$k$ there is a $k$-party, one-round, deterministic protocol
for \SumDist\ over~\Zp\ having communication complexity $k \log k +
C \cdot k$.
\end{thm}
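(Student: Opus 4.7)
The plan is to instantiate the ``shift and partition'' strategy outlined just before Lemma~\ref{lemma:compute-c}. I would first dispose of the easy case where $p$ is small or $k$ is large relative to $p$: if $p \le 5$ or $k \ge p/4$, have each party simply send its entire input $x_i \in \Zp$, costing a total of $k\lceil \log p\rceil \le k(\log(4k)+1) = k\log k + 3k$ bits, which is within the claimed bound.

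In the main case $p \ge 7$ and $k < p/4$, both Lemma~\ref{lemma:compute-d} and Lemma~\ref{lemma:compute-c} apply. Since $p,k,g_0,g_1$ are public parameters, each party and the coordinator independently (and without communication) compute the difference $D = \lceil 2kp/(p-3) \rceil < p$ from Lemma~\ref{lemma:compute-d}, and then the scalar $c \ne 0 \bmod p$ from Lemma~\ref{lemma:compute-c} so that $\dist(c g_0, c g_1) = (p-1)/2$. Party $P_i$ computes $y_i = c x_i \bmod p$, finds the unique $b_i \in \{0,\dots,D-1\}$ with $y_i \in A_{(b_i)}$ (concretely, $b_i = y_i \bmod D$ viewing $y_i$ as an integer in $\{0,\dots,p-1\}$), and sends $b_i$ to the coordinator using $\lceil \log D\rceil$ bits. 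The coordinator forms $A = \sum_{i=1}^k A_{(b_i)}$ and outputs the unique $j \in \{0,1\}$ with $c g_j \in A$.

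Correctness is immediate from what we already have: $\sum_i y_i = c\sum_i x_i \in \{cg_0, cg_1\}$ and lies in $A$, so at least one of $cg_0,cg_1$ is in $A$; Lemma~\ref{lemma:compute-d} gives $|A| \le (p-1)/2 = \dist(cg_0, cg_1)$, so Corollary~\ref{cor:separate} forces \emph{exactly} one of them to lie in $A$, namely the true sum. For the communication, note that $p/(p-3) \le 7/4$ for all $p \ge 7$, so $D \le 7k/2 + 1$ and each party sends $\lceil\log D\rceil = \log k + O(1)$ bits, yielding total communication $k \log k + Ck$ for some universal constant $C$ that also absorbs the easy case. The only mild obstacle is checking that a single constant $C$ works uniformly across both branches and across the boundary regimes (tiny $p$, or $k$ just on either side of $p/4$); beyond that the argument is essentially forced by Lemmas~\ref{lemma:compute-c} and~\ref{lemma:compute-d} together with Corollary~\ref{cor:separate}.
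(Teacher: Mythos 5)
Your proposal is correct and follows essentially the same route as the paper: dispose of $p\le 5$ or $k\ge p/4$ with the trivial protocol, then combine Lemma~\ref{lemma:compute-c} (choose $c$ so that $cg_0,cg_1$ are at distance $(p-1)/2$), Lemma~\ref{lemma:compute-d} (choose $D$ so the sumset has size at most $(p-1)/2$), and Corollary~\ref{cor:separate} to decide which of $cg_0,cg_1$ lies in $\sum_i A_{(b_i)}$. The only cosmetic difference is that the paper phrases the coordinator's rule as ``output $0$ iff $cg_0\in A$'' rather than ``output the unique $j$ with $cg_j\in A$,'' but since you justify that exactly one of the two lies in $A$, these are equivalent.
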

\begin{proof}
There is a trivial protocol having communication complexity $k \cdot
\lceil \log p \rceil$, so the theorem is trivially true if $p\leq 5$
or $k \geq p/4$. In what follows we therefore assume $p>5$ and
$k<p/4$.

Fix arbitrary, distinct $g_0, g_1 \in \Zp$. The protocol for solving
\SumDist\ relative to~$g_0, g_1$ is as follows. Set $c$ as in
Lemma~\ref{lemma:compute-c}, and $D$ as in
Lemma~\ref{lemma:compute-d}. Party $P_i$, holding input~$x_i$,
computes
\[
b_i = \left(\left(c\cdot x_i\right)\bmod p\right)\bmod D
\]
and sends $b_i$ to the coordinator. (Note that $b_i$ is the base for
the $D$-AP $A_{(b_i)}$ containing~$c \cdot x_i \bmod p$.)
The coordinator outputs 0 if $c\cdot g_0 \bmod p$ is in
$\sum_{i=1}^k A_{(b_i)}$, and outputs 1 otherwise.

If $\sum_i x_i = g_0\bmod p$ then $\sum_i c \cdot x_i = c \cdot g_0
\bmod p$, and it is immediate that the coordinator outputs the
correct answer~0. So, assume instead that $\sum_i x_i=g_1\bmod p$.
Then $c\cdot g_1\bmod p$ is in $\sum_i A_{(b_i)}$. Since
$\dist(c\cdot g_0\bmod p,c\cdot g_1\bmod p) = (p-1)/2$ (by
Lemma~\ref{lemma:compute-c}) and $\left|\sum_i A_{(b_i)}\right| \leq (p-1)/2$
(by Lemma~\ref{lemma:compute-d}), we conclude from
Corollary~\ref{cor:separate} that $c \cdot g_0 \bmod p$ is not in
$\sum_i A_{(b_i)}$. Hence, in this case the coordinator outputs the
correct answer~1.

The communication complexity is exactly $k \cdot \lceil \log D
\rceil$ bits. Since $D \leq \frac{2kp}{p-3}+1\leq C' \cdot k$ for
some constant~$C'$ independent of $p$ and~$k$, this completes the
proof.
\end{proof}

\medskip\noindent{\bf Efficient implementation.}
We note that the coordinator can be implemented to run
efficiently. (It is clear that the parties can run efficiently.) First note that from
$b_i$ the coordinator can efficiently compute $\left|A_{(b_i)}\right| = \lceil
\frac{p-1-b_i}{D} \rceil + 1$. It can then compute $d \deq \left|\sum_i A_{(b_i)}\right|
= \sum_{i=1}^k |A_{(b_i)}|-k+1$.
Finally,
the coordinator can check whether $c \cdot g_0 \in \sum_i A_{(b_i)}$ by
computing $b^* = \sum_i b_i\bmod p$ and then checking whether
$(c\cdot g_0-b^*)\cdot D^{-1}\bmod p$ is less than~$d$.

\subsection{A Lower Bound for Deterministic Protocols}

In the following, we consider one-round protocols in which each party always
sends exactly $t$ bits to the coordinator, for some~$t$. We say any such
protocol has \emph{per-party communication complexity~$t$}.

The basic idea of the lower bound is as follows. Each message~$m \in
\bool^t$ from party $P_1$, say, defines a set $A_{1,m}$ of possible
inputs~$x_1$ (namely, those inputs on which $P_1$ would send~$m$).
Given the messages $m_1, \ldots, m_k$ sent by all the parties, the
coordinator learns only that the sum $\sum_i x_i$ lies in the sumset
$\sum_i A_{i,m_i}$. If we can show that there exist some $m_1,
\ldots, m_k$ for which $\sum_i A_{i,m_i}$ contains both $g_0$
and~$g_1$, then there must be some set of inputs on which the
protocol outputs the wrong result. The crux of the proof is to show
that if $t$ is too small, then there exist $m_1, \ldots, m_k$ for
which $\sum_i A_{i,m_i}=\Zp$, and hence the sumset does indeed
contain both $g_0$ and~$g_1$.

\begin{thm}\label{theorem:dlb-dist}
Fix prime $p$ and positive integer $k>1$. If $t \leq \min\{\log
((k-1)/2), \log(p/2)\}$, there is no deterministic, $k$-party
protocol for \SumDist\ over \Zp\ (relative to {\sf any} $g_0, g_1
\in \Zp$) with per-party communication complexity~$t$.
\end{thm}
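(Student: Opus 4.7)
The plan is to argue by contradiction along the lines sketched immediately before the theorem. Suppose $\Pi$ is a deterministic, one-round $k$-party protocol for $\SumDist$ over $\Zp$ relative to some distinct $g_0, g_1$, in which every party sends exactly $t$ bits. For each $P_i$, the message is a function $\mu_i:\Zp \to \bool^t$ of the input; this induces a partition of $\Zp$ into at most $2^t$ nonempty cells, so by pigeonhole some cell $A_i \subseteq \Zp$ has $|A_i| \geq p/2^t$. Let $m_i$ denote the message associated with that cell.

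Applying the Generalized Cauchy-Davenport Theorem (Theorem~\ref{thm:c-d-general}) to $A_1,\ldots,A_k$ yields
\[
\left|{\textstyle \sum_{i=1}^k A_i}\right| \;\geq\; \min\!\left\{p,\; {\textstyle \sum_{i=1}^k |A_i|}-k+1\right\} \;\geq\; \min\!\left\{p,\; kp/2^t - k + 1\right\}.
\]
The core step is to verify that the hypothesis $2^t \leq \min\{(k-1)/2,\, p/2\}$ forces $kp/2^t - k + 1 \geq p$, so that the minimum above equals $p$ and hence $\sum_i A_i = \Zp$. This inequality rearranges to $p(k-2^t) \geq 2^t(k-1)$; combining $k - 2^t \geq (k+1)/2$ (from $2^t \leq (k-1)/2$) with $p \geq 2\cdot 2^t$ (from $2^t \leq p/2$) gives $p(k-2^t) \geq 2^t(k+1) \geq 2^t(k-1)$, as required.

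Once $\sum_i A_i = \Zp$, both $g_0$ and $g_1$ lie in the sumset. I pick $(x_1,\ldots,x_k) \in A_1\times\cdots\times A_k$ summing to $g_0$ and $(x_1',\ldots,x_k') \in A_1\times\cdots\times A_k$ summing to $g_1$. Since $x_i$ and $x_i'$ both lie in the cell $A_i$, party $P_i$ transmits the identical message $m_i$ on each, so the coordinator sees the same transcript $(m_1,\ldots,m_k)$ in both cases and is forced to output the same bit. This contradicts the correctness of $\Pi$ on one of the two input vectors, completing the proof.

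The main obstacle is the arithmetic step in the second paragraph: both halves of the hypothesis on $t$ are needed simultaneously to drive the Cauchy-Davenport lower bound past $p$, so the two inequalities must be combined carefully at the right point. Everything else is bookkeeping around the pigeonhole selection of the $A_i$ and the transcript-matching argument that exhibits two inputs with the same communication pattern but different correct answers.
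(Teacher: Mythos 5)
Your proof is correct and follows the overall strategy sketched in the paragraph preceding the theorem (pigeonhole to select a large cell for each party, Cauchy--Davenport to show the sumset covers $\Zp$, then exhibit two legal input vectors with identical transcripts and opposite correct answers). There is one genuine structural difference from the paper's proof worth noting. The paper applies Cauchy--Davenport to only the first $k-1$ cells, fixes a single $x_k$, and then case-splits on whether $p \geq k-1$ or $k-1 > p$, using exactly one of the two bounds on $2^t$ in each case to push $(k-1)p/2^t - k + 2$ past $p$. You instead apply Cauchy--Davenport to all $k$ cells and combine both bounds $2^t \leq (k-1)/2$ and $2^t \leq p/2$ in a single rearrangement, $p(k-2^t) \geq 2^t(k+1) \geq 2^t(k-1)$, which is equivalent to $kp/2^t - k + 1 \geq p$. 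Your version is a bit more uniform (no case split, all $k$ parties treated symmetrically), while the paper's version has the minor pedagogical advantage of making visible exactly which of the two constraints on $t$ is doing the work in each regime. Both arguments are valid and give the same theorem; the choice is a matter of taste.
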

\begin{proof}
Fix some deterministic protocol for \SumDist\ over \Zp\ (relative to
some $g_0, g_1 \in \Zp$) with per-party communication
complexity~$t$. The protocol defines for each party~$P_i$ a
partition $A_{i,1}, \ldots, A_{i,2^t}$ of~\Zp, where $A_{i,j}$ is
the set of inputs which cause $P_i$ to send $j$ to the coordinator.

For each party $P_i$ there exists an $m_i$ such that $|A_{i,m_i}|
\geq p/2^t$. Moreover, there is a legal set of inputs for the
parties such that $P_1, \ldots, P_{k-1}$ send $m_1, \ldots,
m_{k-1}$, respectively. (Simply take $x_i \in A_{i,m_i}$ for $i=1,
\ldots, k-1$, and then let $x_k \in \{g_0-\sum_{i=1}^{k-1} x_i, \;
g_1-\sum_{i=1}^{k-1} x_i\}$.) When the coordinator receives $m_1,
\ldots, m_{k-1}$ then, even if it is additionally given $P_k$'s
input~$x_k$, the coordinator learns only that the sum $\sum_i x_i$
of the parties' inputs lies in the set $x_k + \sum_{i=1}^{k-1}
A_{i,m_i}$. By Theorem~\ref{thm:c-d-general}, however, we have
\[
\left|\sum_{i=1}^{k-1} A_{i,m_i} \right| \ge \min\left\{p, \;
\sum_{i=1}^{k-1}\frac{p}{2^t}-(k-1)+1\right\} = \min\left\{p,
\frac{(k-1)p}{2^t}-k+2\right\}.
\]

If $p \geq k-1$ then $t \leq \log ((k-1)/2)$ and
\[\frac{(k-1)p}{2^t}-k+2 \geq 2p-k+2 > p.\]
On the other hand, if $k-1 > p$ then $t \leq \log (p/2)$ and
\[\frac{(k-1)p}{2^t}-k+2 \geq k > p.\]
In either case, then, we must have $\left|\sum_{i=1}^{k-1} A_{i,m_i}
\right| \geq p$ and so $\sum_{i=1}^{k-1} A_{i,m_i} = \Zp$. This
implies that there exist inputs $x_1, x'_1 \in A_{1,m_1}, \ldots,
x_{k-1}, x'_{k-1} \in A_{k,m_{k-1}}$ with $x_k+\sum_{i=1}^{k-1} x_i
= g_0$ and $x_k+\sum_{i=1}^{k-1} x'_i = g_1$. But then there exists
some set of legal inputs for the parties on which the coordinator
outputs an incorrect result.
\end{proof}

\ignore{
\subsection{A Lower Bound for Randomized Protocols}

The extension to a randomized lower bound essentially comes from
recognizing that the deterministic lower bound is in fact a
randomized lower bound argument in disguise. First, we briefly
review the basics of randomized lower bound proofs.

\begin{definition}[\cite{B:KusNis97}]
Let $\mu$ be a probability distribution over $x_1, \dots, x_k$ and
let $\eps\in(0, 1/2)$. The \emph{$(\mu, \eps)$-distributional
communication complexity} of a predicate $f$, denoted by
$D^{\mu}_{\eps}(f)$, is the cost of the best deterministic protocol
that gives the correct answer for $f$ on at least a $1-\eps$
fraction of all possible $x_1, \dots, x_k$, weighted by $\mu$.
\end{definition}

Yao's MinMax principle says that lower bounds in distributional
communication complexity also lower bound randomized communication
complexity.

\begin{lemma}[Yao's MinMax principle~\cite{B:KusNis97}]\label{lemma:Yao}
The randomized communication complexity of any predicate $f$ equals $\max_{\mu} D^{\mu}_{\eps}(f)$.
\end{lemma}

With this background in hand, we now prove a randomized lower bound for \SumDist.

\begin{thm}\label{theorem:rlb-dist}
\begin{sloppypar}
For $k<p$, the one-round, randomized communication complexity of \SumDist\ over \Zp\
is bounded from below by $k\log(k/2)$ in expectation.
\end{sloppypar}
\end{thm}

\begin{proof}
Following Lemma~\ref{lemma:Yao}, it suffices to lower bound the
distributional communication complexity of \SumDist\ for any fixed
distribution on inputs. We let the hard distribution $\mu$ be
generated as follows: For $i=1,\dots, k-1,$ let $x_i\in\Zp$ be drawn
uniformly at random. Then, let $x_k\in\{g_0-\sum_{i\le k}x_i\bmod
p,\;g_1-\sum_{i\le k}x_i\bmod p\}$ be chosen uniformly at random.
Clearly, this produces legal inputs for \SumDist. We aim to show
that for any fixed deterministic protocol $\Pi$, that for too small
$t$ and for any $\eps\in(0, 1/2)$, the coordinator must err on a
$\delta >\eps$ fraction of the inputs in expectation when the inputs
are drawn according to~$\mu$.

Fix some correct, deterministic protocol~$\Pi$ for \SumDist\ over
\Zp, where the parties' inputs are drawn according to~$\mu$, and fix
$t\le\log\left(\frac{(k-1)p}{p+k-2}\right)$. Assume each party sends
$t$ bits. As before, partition \Zp\ into $A_{i, 1}, \dots, A_{i,
2^t}$ for $i=1,\dots,k-1$ according to $\Pi$. For $i=1, \dots, k-1,$
and for whatever message $m_i$ that $\Pi$ causes $P_i$ to send on
input $x_i$, denote the set of possible values that $P_i$ might hold
from the view of the coordinator on receipt of $m_i$ as~$\Astar_i$.

Now, the $|\Astar_i|$ are random variables over the distribution on
inputs, $\mu$. Since $x_1, \dots, x_{k-1}$ are drawn from \Zp\
uniformly at random under $\mu$, an averaging argument gives
\begin{equation}\label{eqn:expectation_value}
\mathbb{E}[|\Astar_i|] = p/2^t,
\end{equation}
for $i=1, \dots, k-1$. Now we would like to relate the expectation
of the first $k-1$ of the $|\Astar_i|$ to the expectation of the
size of the $\Astar_i$'s sumset, i.e. $|\sum_{i=1}^{k-1} \Astar_i|$,
using the $k$-iterated Cauchy-Davenport theorem similar to the
deterministic lower bound proof. Toward this end, define {\sf BAD}
as the event that $\sum_{i=1}^{k-1} |\Astar_i|-k+2 > p$. In case of
{\sf BAD}, $\min\{p, \sum_{i=1}^{k-1} |\Astar_i|-k+2\} = p$ and the
protocol fails as in the deterministic case.  Note also that
$\Pr[{\sf BAD}] \le \epsilon$, as by construction we cannot err more
than $\epsilon$.

Now, conditioning on $\overline{\sf BAD}$, we get
\begin{eqnarray*}
\mathbb{E}\left[\left|\sum_{i=1}^{k-1} A^*_i\right| \mid \overline{\sf BAD}\right] &\ge \mathbb{E}\left[\min\left\{p,\; \sum_{i=1}^{k-1}|A^*_i|-k+2\right\} \mid \overline{\sf BAD}\right] \\
&=& \mathbb{E}\left[\sum_{i=1}^{k-1}|A^*_i|-k+2\right]\\
&=& \sum_{i=1}^{k-1}\left(\mathbb{E}\left[|A^*_i|\right]\right)-k+2\\
&=& \frac{(k-1)p}{2^t}-k+2,
\end{eqnarray*}
where the first equality follows from conditioning on $\overline{\sf
BAD}$ and the last by~(\ref{eqn:expectation_value}).

Finally, we want to compute the expected size of our sumset.  That is,
\begin{equation*}
\begin{split}
\mathbb{E}\left[\left|\sum_{i=1}^{k-1} A^*_i\right|\right]
&= \mathbb{E}\left[\left|\sum_{i=1}^{k-1} A^*_i\right| \mid {\sf BAD}\right]
   \cdot \Pr[{\sf BAD}] +
   \mathbb{E}\left[\left|\sum_{i=1}^{k-1} A^*_i\right| \mid \overline{\sf BAD}\right]
   \cdot \Pr[\overline{\sf BAD}] \\
&\ge p \cdot \epsilon + \left(\frac{(k-1)p}{2^t} - k + 2\right) \cdot (1-\epsilon).
\end{split}
\end{equation*}

Since $t \le \log\left(\frac{(k-1)p}{p+k-2}\right)$, we conclude that
\[
\mathbb{E}\left[\left|\sum_{i=1}^{k-1} A^*_i\right|\right] \ge p = |\Zp|,
\]
\noindent so in expectation, the coordinator learns nothing about $\sum_{i\le k-1} x_i\bmod p$ from the communication.

In this situation, even if the coordinator learns the value of $x_k$ in full (for any legal $x_k\in\Zp$),
the coordinator learns nothing about the full sum, and may therefore in the best case randomly output a value according
to the \emph{a priori} probability of $\sum_i x_i$ being either $g_0$ or $g_1$ under $\mu$.
However, since $x_k\in\{g_0-\sum_{i\le k-1}x_i\bmod p,\;g_1-\sum_{i\le k-1}x_i\bmod p\}$ was chosen uniformly at random,
\[\Pr_{x_i\leftarrow\mu}\left[\sum_i x_i=g_0\bmod p\right]=\Pr_{x_i\leftarrow\mu}\left[\sum_i x_i=g_1\bmod p\right],\]
which means the coordinator errs on at least a
$\delta=1/2>\eps\in(0, 1/2)$ fraction of the inputs in expectation.
As $t$ is the per-party communication and $k<p$, we conclude that
any correct protocol must communicate at least $k\log(k/2)$ bits in
expectation.
\end{proof}
}

\section{A Randomized Protocol for Sum-Equal over \Zp}\label{sec:sumequal}

Our protocol for \SumEqual\ is similar to
our protocol for \SumDist. Namely, each party $P_i$ scales its input $x_i$ by some value~$c$
and sends the index of the $D$-AP $A_i$ that contains the scaled value~$c \cdot x_i \bmod p$;
the coordinator outputs~1 iff $c \cdot g \in \sum_i A_i$.

Note that the coordinator never errs if $\sum_i x_i = g$, and so we need only
analyze the case when $\sum_i x_i \neq g$.
In the case of \SumDist, we are guaranteed that $\sum_i x_i \in \{g_0, g_1\}$
and so we set~$c$ to some fixed
value such that $cg_0$ and $cg_1$ are ``far apart.''
The problem here is that $\sum_i x_i$ can be arbitrary.
To deal with this, we have
the parties select~$c \in \Zp$
uniformly at random using the public randomness.
If $\sum_i x_i = g' \neq g$ then the protocol will succeed as long as $cg$ and $cg'$ are sufficiently
``far apart'' as before. By setting the parameters of the protocol appropriately,
we ensure that this happens with high probability over choice of~$c$.

\begin{lemma}\label{lemma:compute-c2}
Fix a prime $p>2$, a difference $D\ne 0\bmod p$, and $\xi \in (0,1)$.
Then for any
distinct $g, g' \in\Zp$ there are at least $\xi \cdot (p-1)$
values $c \neq 0 \bmod p$
such that \[\dist(c\cdot g\bmod p,\; c\cdot g' \bmod
p)> (1-\xi) \cdot \frac{ (p-1)}{2}.\]
\end{lemma}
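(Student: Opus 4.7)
The plan is to reduce the counting problem to counting integers in a short interval, via a bijection induced by multiplication. Specifically, since $g \ne g' \bmod p$ and $D \ne 0 \bmod p$, the element $\alpha \deq (g'-g)D^{-1} \bmod p$ is a nonzero element of $\Zp$. Therefore, as $c$ ranges over $\{1, \dots, p-1\}$, the value $y \deq c\alpha \bmod p$ is a bijection onto $\{1, \dots, p-1\}$. Under this substitution, and unfolding Definition~\ref{defn:distance}, we have
\[
\dist(c\cdot g \bmod p, c \cdot g' \bmod p) = \min\{y,\; p-y\},
\]
where $y$ is viewed as an integer in $\{1, \dots, p-1\}$.

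Next, I would translate the desired inequality $\min\{y, p-y\} > (1-\xi)(p-1)/2$ into a statement about membership in an interval. Letting $L \deq (1-\xi)(p-1)/2$, the condition is equivalent to $y \in (L, p-L)$. Since $0 \le L$ and $p - L \le p$, every integer in this open interval automatically lies in $\{1, \dots, p-1\}$, so each such integer corresponds to exactly one valid $c$ under the bijection above.

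It therefore suffices to count the integers in the open interval $(L, p-L)$, whose length is
\[
p - 2L = p - (1-\xi)(p-1) = 1 + \xi(p-1).
\]
A standard fact is that an open real interval of length $\ell$ contains at least $\lceil \ell \rceil - 1$ integers (using that the number of integers strictly between $a$ and $b$ equals $\lceil b \rceil - \lfloor a \rfloor - 1 \ge \lceil b - a \rceil - 1$). Applied here, this lower bound is $\lceil \xi(p-1) \rceil \ge \xi(p-1)$, which is exactly the quantity claimed.

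There is no real obstacle; the only mild care required is in handling the open versus closed endpoints so that the strict inequality in the lemma statement matches the strict inequalities $L < y < p-L$, and in verifying that the bound $\lceil \ell \rceil - 1$ for the number of integers in an open interval of length $\ell$ is applied correctly (in particular, the extra $+1$ in the length $1 + \xi(p-1)$ is what absorbs the loss from the ceiling).
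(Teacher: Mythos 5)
Your proposal is correct and is essentially the same argument as the paper's: you both multiply $c$ by $(g-g')D^{-1}$ (up to sign) to turn the distance condition into membership of a single residue $y$ (or $d$) in a symmetric interval around $(p-1)/2$, and then count the integers in that interval, which comes out to $2\lceil\xi(p-1)/2\rceil\ge\xi(p-1)$. The paper constructs each good $c$ from a $d$ in $\{(p-1)/2-\delta+1,\dots,(p-1)/2+\delta\}$ with $\delta=\lceil\xi(p-1)/2\rceil$; your bijection-plus-interval-length phrasing identifies precisely the same set of $c$'s.
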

\begin{proof}
Set $\delta \deq \lceil \xi \cdot (p-1)/2 \rceil$, and
take any $d$ in the set
$\left\{\frac{(p-1)}{2} - \delta+1, \ldots, \frac{(p-1)}{2}+\delta\right\}$
of size $2\delta$. Set
$c=d \cdot D(g-g')^{-1}\bmod p$. Then
\begin{eqnarray*}
c(g-g')D^{-1} =
    d \cdot D(g-g')^{-1}(g-g')D^{-1}
    = d \bmod p.
\end{eqnarray*}
So,
\begin{eqnarray*}
\dist(c\cdot g\bmod p,\; c\cdot g' \bmod p) & = & \min \{c(g-g')D^{-1} \bmod p, \;
c(g'-g)D^{-1} \bmod p\} \\
& = &
 \min\{d, p-d\} \;\; \geq \;\; \frac{(p-1)}{2} - \delta+1,\end{eqnarray*}
completing the proof.
\end{proof}

\begin{lemma}\label{lemma:compute-d2}
Fix a prime $p>5$, an integer $k< p/4$, and $\epsilon > \frac{2k}{p-3}$. Set
$D=\left\lceil\frac{2kp}{\epsilon (p-3)}\right\rceil<p$. Then for any $D$-APs
$A_1, \dots, A_k\subseteq\Zp$, we have $\left|\sum_{i=1}^k
A_i\right| < \epsilon \cdot \frac{(p-1)}{2}+1$.
\end{lemma}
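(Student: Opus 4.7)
The plan is to mirror the proof of Lemma~\ref{lemma:compute-d} essentially verbatim, tracking how the parameter $\epsilon$ threads through the arithmetic. The condition $\epsilon > \frac{2k}{p-3}$ is precisely what guarantees that $D < p$, so the $D$-APs are well-defined and $D \ne 0 \bmod p$, which lets us invoke the results from Section~\ref{sec:prelim}.

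First, I would apply Lemma~\ref{lemma:c-d-gen-min} to obtain
\[
\left|\sum_{i=1}^k A_i\right| = \min\!\left\{p,\; \sum_{i=1}^k |A_i| - k + 1\right\}.
\]
Next, since each $A_i$ is a $D$-AP in $\Zp$, I would use $|A_i| \le \lceil p/D \rceil \le p/D + 1$, giving
\[
\sum_{i=1}^k |A_i| - k + 1 \le k\!\left(\frac{p}{D} + 1\right) - k + 1 = \frac{kp}{D} + 1.
\]
Then I would plug in $D \ge \frac{2kp}{\epsilon(p-3)}$, which yields $\frac{kp}{D} \le \frac{\epsilon(p-3)}{2}$, and therefore
\[
\sum_{i=1}^k |A_i| - k + 1 \le \frac{\epsilon(p-3)}{2} + 1 < \frac{\epsilon(p-1)}{2} + 1.
\]
Since the right-hand side strictly exceeds the minimum in the Cauchy--Davenport expression, the same bound carries over to $\left|\sum_i A_i\right|$, finishing the proof.

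I don't expect a genuine obstacle here: the argument is structurally identical to Lemma~\ref{lemma:compute-d}, which corresponds to the case $\epsilon = 1$ (up to the constant-factor difference between $(p-1)/2$ and $(p-3)/2$). The only thing to be slightly careful about is bookkeeping around the ceiling in the definition of $D$ and confirming that the hypothesis $\epsilon > 2k/(p-3)$ really does force $D < p$, so that the $D$-APs have the structure used in Lemma~\ref{lemma:c-d-gen-min}. Everything else is a routine substitution.
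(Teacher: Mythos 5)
Your proof is correct and follows the paper's argument essentially verbatim: both invoke Lemma~\ref{lemma:c-d-gen-min}, bound $|A_i|\le p/D+1$, substitute $D\ge \frac{2kp}{\epsilon(p-3)}$ to obtain $\frac{\epsilon(p-3)}{2}+1$, and conclude this is $<\frac{\epsilon(p-1)}{2}+1$. The only differences are cosmetic (you explicitly flag the role of the hypothesis $\epsilon>\frac{2k}{p-3}$ in ensuring $D<p$, which the paper leaves implicit).
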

\begin{proof}
By Lemma~\ref{lemma:c-d-gen-min},
\[
\left|\sum_{i=1}^k A_i\right|=\min\left\{p,\; \sum_{i=1}^k\left|A_i\right|-k+1\right\}.
\]
Observe that $|A_i|\le\lceil\frac{p}{D}\rceil\le\frac{p}{D}+1$. Therefore,
\begin{eqnarray*}
\sum_{i=1}^k\left|A_i\right|-k+1 &\le& \sum_{i=1}^k\left(\frac{p}{D}+1\right)-k+1 \\
    &\le& \frac{kp \epsilon (p-3)}{2kp}+k-k+1 < \frac{\epsilon(p-1)}{2} + 1, 
\end{eqnarray*}
completing the proof.
\end{proof}

\begin{thm}
There is a universal constant $C$ such that for any prime $p$,
positive integer~$k$, and $\epsilon \in (0, 1)$, there is a $k$-party, one-round protocol
for \SumEqual\ over~\Zp\ using public randomness, with error at most~$\epsilon$ and
communication complexity $k \log k/\epsilon + C \cdot k$.
\end{thm}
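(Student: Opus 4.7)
The plan is to follow the blueprint of Theorem~\ref{theorem:ub-dist} almost verbatim, with the deterministic choice of the scaling constant $c$ replaced by one drawn uniformly from the public randomness. I would first dispose of the ``trivial'' regime where $p = O(k/\epsilon)$ by having each party send its input in full; this deterministic protocol has cost $k\lceil \log p\rceil = k\log(k/\epsilon) + O(k)$, which fits the stated bound after absorbing constants into $C$. In the remaining regime ($p$ sufficiently large compared to $k/\epsilon$), the parties use their public randomness to sample $c \in \{1,\dots,p-1\}$ uniformly; each $P_i$ then sends $b_i = ((c\cdot x_i)\bmod p)\bmod D$ for the difference $D$ chosen below, and the coordinator outputs $1$ iff $c\cdot g\bmod p$ lies in the sumset $\sum_i A_{(b_i)}$, checked efficiently exactly as described following Theorem~\ref{theorem:ub-dist}.

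Correctness on ``yes'' instances is immediate: if $\sum_i x_i = g$, then $c\cdot g = \sum_i (c\cdot x_i) \in \sum_i A_{(b_i)}$ always. For a ``no'' instance, let $g' \neq g$ be the actual sum; then $c\cdot g'$ lies in the sumset, so the coordinator errs precisely when $c\cdot g$ also lies in the sumset. By Corollary~\ref{cor:separate}, this can happen only if $\dist(c\cdot g\bmod p,\, c\cdot g'\bmod p) < \lvert \sum_i A_{(b_i)}\rvert$, so it suffices to choose parameters so that this inequality fails with probability at least $1-\epsilon$ over the random $c$.

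The parameters are obtained by playing Lemma~\ref{lemma:compute-c2} off against Lemma~\ref{lemma:compute-d2}. Applying Lemma~\ref{lemma:compute-c2} with $\xi = 1-\epsilon/2$ shows that for at least a $(1-\epsilon/2)$-fraction of the nonzero $c$'s, $\dist(c\cdot g\bmod p,\, c\cdot g'\bmod p) > \epsilon(p-1)/4$. Applying Lemma~\ref{lemma:compute-d2} with its parameter set to $\epsilon/8$ (legitimate whenever $p > 3 + 16k/\epsilon$, which is exactly the non-trivial regime up to constants, and automatically implies $p > 5$ and $k < p/4$) yields $D = \lceil 16kp/(\epsilon(p-3))\rceil$ and $\lvert\sum_i A_{(b_i)}\rvert < \epsilon(p-1)/16 + 1$, which is bounded by $\epsilon(p-1)/4$ in the same regime. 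Hence every ``good'' $c$ satisfies the separation hypothesis of Corollary~\ref{cor:separate} and produces the correct output, so the total error is at most $\epsilon/2 \leq \epsilon$.

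The communication cost is $k\lceil \log D\rceil$, and since $D \leq C' k/\epsilon$ for a universal constant $C'$ independent of $p$, this totals $k\log(k/\epsilon) + O(k)$ as required. I do not foresee a substantial obstacle: the argument is a routine randomized analogue of Theorem~\ref{theorem:ub-dist}, and the only mildly delicate point is the parameter bookkeeping — matching $\xi$ from Lemma~\ref{lemma:compute-c2} against the sumset-size parameter of Lemma~\ref{lemma:compute-d2} so that the distance attained by the ``good'' $c$'s strictly exceeds the sumset size, while verifying that the hypothesis on $\epsilon/8$ and the hypothesis $k < p/4$ are both covered by falling back to the trivial protocol whenever they fail.
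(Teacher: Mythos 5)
Your proposal is correct and follows essentially the same route as the paper's proof: trivial protocol in the degenerate regime, public-random scaling by $c$, parties sending $D$-AP indices, coordinator testing membership of $c\cdot g$ in the sumset, and correctness via Lemmas~\ref{lemma:compute-c2} and~\ref{lemma:compute-d2} combined with Corollary~\ref{cor:separate}. The only difference is bookkeeping: the paper takes $\xi = 1-\epsilon$ and the Lemma~\ref{lemma:compute-d2} parameter equal to $\epsilon$ itself, then closes the gap between the strict bound $\dist > \epsilon(p-1)/2$ and the bound $\lvert\sum_i A_{(b_i)}\rvert < \epsilon(p-1)/2 + 1$ by observing that both quantities are integers; you instead take $\xi = 1-\epsilon/2$ and parameter $\epsilon/8$, buying enough slack ($\epsilon(p-1)/16 + 1 \le \epsilon(p-1)/4$) to avoid the integrality remark, at the cost of a slightly larger universal constant $C$. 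Both are fine.
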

\begin{proof}
There is a trivial protocol with communication complexity $k \cdot
\lceil \log p \rceil$, so the theorem is true if $p\leq 5$
or $k \geq p/4$ or $\epsilon \leq \frac{2k}{p-3}$. In what follows we therefore assume $p>5$,
$k<p/4$, and $\epsilon > \frac{2k}{p-3}$.

The protocol for solving
\SumEqual\ is as follows. Set $D$ as in
Lemma~\ref{lemma:compute-d2}, and use the public randomness to choose uniform $c \in \Zp\setminus \{0\}$.
Party $P_i$, holding input~$x_i$,
computes
\[
b_i = \left(\left(c\cdot x_i\right)\bmod p\right)\bmod D
\]
and sends $b_i$ to the coordinator.
The coordinator outputs 1 if $c\cdot g \bmod p$ is in
$\sum_{i=1}^k A_{(b_i)}$, and outputs 1 otherwise.

If $\sum_i x_i = g\bmod p$ then $\sum_i c \cdot x_i = c \cdot g
\bmod p$ and the coordinator always outputs the
correct answer~1. Now say $\sum_i x_i=g' \neq g \bmod p$.
Then $c\cdot g'\bmod p$ is in $\sum_i A_{(b_i)}$.
Using Lemma~\ref{lemma:compute-d2}, we have
$\left|\sum_i A_{(b_i)}\right| < \epsilon (p-1)/2 + 1$.
Using Lemma~\ref{lemma:compute-c2},
with probability at least $\xi \deq 1-\epsilon$ we have
$\dist(c\cdot g\bmod p,\; c\cdot g'\bmod p) > \epsilon (p-1)/2$.
Assuming that to be the case, we have
\[\dist(c\cdot g\bmod p,\; c\cdot g'\bmod p) \geq \left|{\textstyle \sum_i A_{(b_i)}}\right|\]
(note that both sides of the above are integers), and so we
conclude from
Corollary~\ref{cor:separate} that $c \cdot g \bmod p$ is not in
$\sum_i A_{(b_i)}$. We thus see that with probability at least $1-\epsilon$ the coordinator outputs the
correct answer~0.

The communication complexity is exactly $k \cdot \lceil \log D
\rceil$ bits. Since $D \leq \frac{2kp}{\epsilon(p-3)}+1\leq C' \cdot k/\epsilon$ for
some constant~$C'$ independent of $p, k$, and $\epsilon$, this completes the
proof.
\end{proof}

\section{Protocols Over \Z\ and \Zm}\label{sec:extensions}

In what follows, we show how to modify our protocols to work over
the integers and in \ZN\ for square-free~$N$.

\medskip\noindent{\bf Protocol over \Z.} Working over \Z\ is relatively easy.
The parties are given inputs in $\{0, \ldots, 2^n-1\}$. The maximum
sum of all the inputs is~$k 2^{n}$, and the ``target values'' are at
most that also. The parties choose the smallest prime $p > k2^{n}$,
treat their inputs as lying in~$\Zp$, and run the protocol for \Zp.
Note that $\sum_i x_i = g$ over the integers iff $\sum_i x_i = g
\bmod p$ by our choice of~$p$.

\medskip\noindent{\bf Protocol over \Zm.}
Assume $N$ is square-free, and let $N = \prod_{i=1}^m p_i$ be the
prime factorization of~$N$. The parties can then work modulo each of
the~$p_i$, and rely on the Chinese remainder theorem for
correctness. The complexity of the protocol scales with the number
of prime factors~$m$.

\def\shortbib{0}


\bigskip\noindent
{\small This research was sponsored by the Army Research Laboratory
and was accomplished under
Cooperative Agreement Number W911NF-11-2-0086. The views and conclusions contained in this document
are those of the authors and should not be interpreted as representing the official policies,
either expressed or implied, of the Army Research Laboratory or the U.S. Government.
The U.S. Government
is authorized to reproduce and distribute reprints for Government purposes notwithstanding any copyright
notation herein.}

\end{document}